\providecommand{\tabularnewline}{\\}
\theoremstyle{plain}
\newtheorem{thm}{\protect\theoremname}
\providecommand{\theoremname}{Theorem}
\begin{document}
\title{Loop corrections in spin models through density consistency}
\author{Alfredo Braunstein}
\affiliation{Politecnico di Torino, Corso Duca Degli Abruzzi 24, Torino, Italy }
\affiliation{Italian Institute for Genomic Medicine, Via Nizza 52, Torino, Italy}
\affiliation{Collegio Carlo Alberto, Via Real Collegio 1, Moncalieri, Italy \&
INFN Sezione di Torino, Via P. Giuria 1, I-10125 Torino, Italy}
\author{Giovanni Catania}
\affiliation{Politecnico di Torino, Corso Duca Degli Abruzzi 24, Torino, Italy }
\author{Luca Dall'Asta}
\affiliation{Politecnico di Torino, Corso Duca Degli Abruzzi 24, Torino, Italy }
\affiliation{Collegio Carlo Alberto, Via Real Collegio 1, Moncalieri, Italy \&
INFN Sezione di Torino, Via P. Giuria 1, I-10125 Torino, Italy}
\begin{abstract}
Computing marginal distributions of discrete or semi-discrete Markov
Random Fields (MRF) is a fundamental, generally intractable, problem
with a vast number of applications on virtually all fields of science.
We present a new family of computational schemes to calculate approximately
marginals of discrete MRFs. This method shares some desirable properties
with Belief Propagation, in particular providing exact marginals on
acyclic graphs; but at difference with it, it includes some loop corrections,
i.e. it takes into account correlations coming from all cycles in
the factor graph. It is also similar to Adaptive TAP, but at difference
with it, the consistency is not on the first two moments of the distribution
but rather on the value of its density on a subset of values. Results
on random connectivity and finite dimensional Ising and Edward-Anderson
models show a significant improvement with respect to the Bethe-Peierls
(tree) approximation in all cases, and with respect to Plaquette Cluster
Variational Method approximation in many cases. In particular, for
the critical inverse temperature $\beta_{c}$ of the homogeneous hypercubic
lattice, the expansion of $\left(d\beta_{c}\right)^{-1}$ around $d=\infty$
of the proposed scheme is exact up to the $d^{-4}$ order, whereas
the two latter are exact only up to the $d^{-2}$ order.
\end{abstract}
\maketitle

\paragraph*{Introduction.}

Markov random fields (MRF) are undirected probabilistic graphical
models in which random variables satisfy a conditional independence
property, so that the joint probability measure can be expressed in
a factorized form, each factor involving a possibly different subset
of variables \citep{lauritzen1996graphical}. Computing marginal distributions
of discrete or semi-discrete Markov Random Fields is a fundamental
step in most approximate inference methods and high-dimensional estimation
problems \citep{mackay2003information}, such as the evaluation of
equilibrium observables in statistical mechanics models. The exact
calculation of marginal distributions is however intractable in general,
and it is common to resort to stochastic sampling algorithms, such
as Monte-Carlo Markov Chain, to obtain unbiased estimates of the relevant
quantities. On the other hand, it is also useful to derive approximations
of the true probability distribution, for which marginal quantities
can be deterministically computed. An important family of approximations
is the one of mean-field (MF) schemes. The simplest is naive mean-field
(nMF), that neglects all correlations between random variables. An
improved MF approximation \citep{opper2001naive}, called Thouless-Anderson-Palmer
(TAP) equations, works well for models with weak dependencies but
it is usually unsuitable for MRFs on diluted models. Here, a considerable
improvement is provided by the Bethe-Peierls approximation or Belief
Propagation (BP), that is exact for probabilistic models defined on
graphs without loops \citep{mezard2009information}. It is a matter
of fact that BP has been successfully employed even in on loopy probabilistic
models both in physics and in applications (see e.g. \citet{turbo}),
yet the lack of analytical control on the effect of loops calls for
novel approaches that could systematically improve with respect to
BP. A traditional way to account for the effect of short loops is
by means of Cluster Variational Methods (CVM), that treat exactly
correlations generated between variables within a finite region $\mathcal{R}$
\citep{kikuchi1951theory,yedidia2003understanding,pelizzola2005cluster}.
The main limitation of CVM resides in its algorithmic complexity,
that grows exponentially with the size of the region $\mathcal{R}$.
A completely different path to systematically improve BP is represented
by loop series expansions \citep{chertkov2006loop,gomez2007truncating,parisi2006loop,altieri2017loop}
in which BP is obtained as a saddle-point in a corresponding effective
field theory. Loop corrections to BP equations can be alternatively
introduced in terms of local equations for correlation functions,
as first suggested for pairwise models \citep{montanari_how_2005}
and later extended to arbitrary factor graphs \citep{mooij2007loop,rizzo2007cavity,ohzeki2013belief}.
This method consists in considering deformed local marginal probabilities
on a ``cavity graph'', obtained removing a factor node (i.e. interaction),
and imposing a consistency condition on single-node marginals. On
trees, BP equations are recovered, whereas on loopy graphs the obtained
set of equations is strongly under-determined and requires additional
constraints. Linear-response relations were exploited to this purpose
in \citep{montanari_how_2005}, even though other moment closure methods
are possible \citep{mooij2007loop}. 

A different approach to approximate inference exploits the properties
of multivariate Gaussian distributions, that have the advantage of
retaining information on correlations albeit allowing for explicit
calculations. In particular, Expectation Propagation (EP, which can
be thought as an adaptive variant of TAP) is a very successful algorithmic
technique in which a tractable approximate distribution is obtained
as the outcome of an iterative process in which the parameters of
a multivariate Gaussian are optimized by means of local moment matching
conditions \citep{minka_expectation_2001,opper_adaptive_2001}. EP
has been applied to problems involving discrete random variables by
employing atomic measures \citep{opper2005expectation}. In the present
work, we put forward a new family of computational schemes to calculate
approximate marginals of discrete MRFs. We exploits the flexibility
of multivariate Gaussian approximation methods but, unlike EP and
inspired by beliefs marginalization condition in BP, we impose that
marginals on discrete variables are locally consistent, a condition
that we call \emph{density consistency}. When the underlying graph
is a tree, the set of equations produced is equivalent to BP. As for
Ref.\citep{montanari_how_2005}, the density consistency condition
leaves an under-determined system of equations on loopy graphs, that
can be solved once supplemented with a further set of closure conditions.
As a result of employing Gaussian distributions, higher order correlation
functions between neighbors of a given variable are, at least partially,
taken into account.

\paragraph*{The Model.}

Consider a factorized distribution of binary variables $x_{1},\dots,x_{n}\in X=\left\{ -1,1\right\} $
for arbitrary \emph{positive factors $\psi_{a}:X_{a}\to\mathbb{R}_{+}$,
}each depending on a sub-vector $\boldsymbol{x}_{a}=\left\{ x_{i}\right\} _{i\in\partial a}\in X_{a}$
\begin{equation}
p\left(\boldsymbol{x}\right)=\frac{1}{z}\prod_{a\in A}\psi_{a}\left(\boldsymbol{x}_{a}\right).\label{eq:factorization}
\end{equation}
The bipartite graph $G=\left(V,E\right)$ with $V=I\cup A$ the disjoint
union of variable indices $I=\left\{ 1,\dots,n\right\} $ and factor
indices and $E=\left\{ \left(ia\right):i\in\partial a\right\} $,
is called the \emph{factor graph} of the factorization \eqref{eq:factorization},
and as we will see some of its topological features are crucial to
devise good approximations. Particular important cases of \eqref{eq:factorization}
include e.g. Ising spin models, many neural network models and the
uniform distribution of solutions of $k-$SAT formulas. Computing
marginal distributions from \eqref{eq:factorization} is in general
NP-Hard (i.e. computationally intractable). 

\paragraph*{Density Consistency.}

Following Gaussian Expectation Propagation (otherwise called adaptive
TAP or Expectation Consistency) \citep{opper_adaptive_2001,minka_expectation_2001},
we will approximate an intractable $p\left(\boldsymbol{x}\right)$
by a Normal distribution $g\left(\boldsymbol{x}\right)$. To do so,
we will replace each $\psi_{a}\left(\boldsymbol{x}_{a}\right)$ by
an appropriately defined multivariate normal distribution $\phi_{a}\left(\boldsymbol{x}_{a}\right)=\mathcal{N}\left(\boldsymbol{x}_{a};\boldsymbol{\mu}^{a},\boldsymbol{\Sigma}^{a}\right)$.
Parameters $\boldsymbol{\mu}^{a},\boldsymbol{\Sigma}^{a}$ will be
selected as follows. First define
\[
g\left(\boldsymbol{x}\right)=\frac{1}{z_{g}}\prod_{a}\phi_{a}\left(\boldsymbol{x}_{a}\right)=\frac{1}{z'}e^{-\frac{1}{2}\left(\boldsymbol{x}-\boldsymbol{\mu}\right)^{T}\boldsymbol{\Sigma}^{-1}\left(\boldsymbol{x}-\boldsymbol{\mu}\right)}.
\]
and $g^{\left(a\right)}\left(\boldsymbol{x}\right)=\frac{1}{z_{a}}g\left(\boldsymbol{x}\right)\sum_{\hat{\boldsymbol{x}}_{a}\in X_{a}}\delta\left(\boldsymbol{x}_{a}-\hat{\boldsymbol{x}}_{a}\right)\frac{\psi_{a}\left(\hat{\boldsymbol{x}}_{a}\right)}{\phi_{a}\left(\hat{\boldsymbol{x}}_{a}\right)}$,
$g^{\left(i\right)}\left(\boldsymbol{x}\right)=\frac{1}{z_{i}}g\left(\boldsymbol{x}\right)\sum_{\hat{x}_{i}\in X}\delta\left(x_{i}-\hat{x}_{i}\right)$
as auxiliary distributions. Matching between marginals $g^{\left(i\right)}\left(x_{i}\right)$
and $g^{\left(a\right)}\left(x_{i}\right)$ results in 
\begin{align}
\frac{\mu_{i}}{\Sigma_{ii}} & =\text{atanh}\left\langle x_{i}\right\rangle _{g^{\left(a\right)}}\label{eq:DC}
\end{align}
$\forall i\in\partial a,a\in A$, giving $\sum_{a}\left|\partial a\right|$
equations. As we will see, \eqref{eq:DC} is chosen because it ensures
exactness on acyclic graphs. We call \emph{Density Consistency }(DC)
any scheme that enforces Eq. \eqref{eq:DC}. We propose to complement
Equation \eqref{eq:DC} with matching of first moments and Pearson
correlation coefficients corr$_{Q}\left(x,y\right)=(\left\langle xy\right\rangle _{Q}-\left\langle x\right\rangle _{Q}\left\langle y\right\rangle _{Q})(\left\langle x^{2}\right\rangle _{Q}-\left\langle x\right\rangle _{Q}^{2})^{-\frac{1}{2}}(\left\langle y^{2}\right\rangle _{Q}-\left\langle y\right\rangle _{Q}^{2})^{-\frac{1}{2}}$
(although other closures are possible, see \ref{subsec:Other-closure-equations})
\begin{equation}
\mu_{i}=\left\langle x_{i}\right\rangle _{g^{\left(a\right)}}\quad,\hspace{1em}\frac{\Sigma_{ij}}{\sqrt{\Sigma_{ii}\Sigma_{jj}}}=\rho\text{corr}_{g^{\left(a\right)}}\left(x_{i},x_{j}\right)\label{eq:DCT}
\end{equation}
for $i\neq j$ where $\rho\in\left[0,1\right]$ is an interpolating
parameter that is fixed to $1$ for the time being. Relations \eqref{eq:DC}-\eqref{eq:DCT}
give a system of $\sum_{a}\left|\partial a\right|\left(\left|\partial a\right|+3\right)/2$
equations and unknowns, that can be solved iteratively to provide
an approximation for the first $\left\langle x_{i}\right\rangle _{p}$
and second moments $\left\langle x_{i}x_{j}\right\rangle _{p}$ of
the original distribution. In a parallel update scheme (in which all
factors parameters are update simultaneously) the computational cost
of each iteration is $O\left(N^{3}\right)$, dominated by the calculation
of $\Sigma$. 

\begin{figure*}
\includegraphics[width=0.33\textwidth]{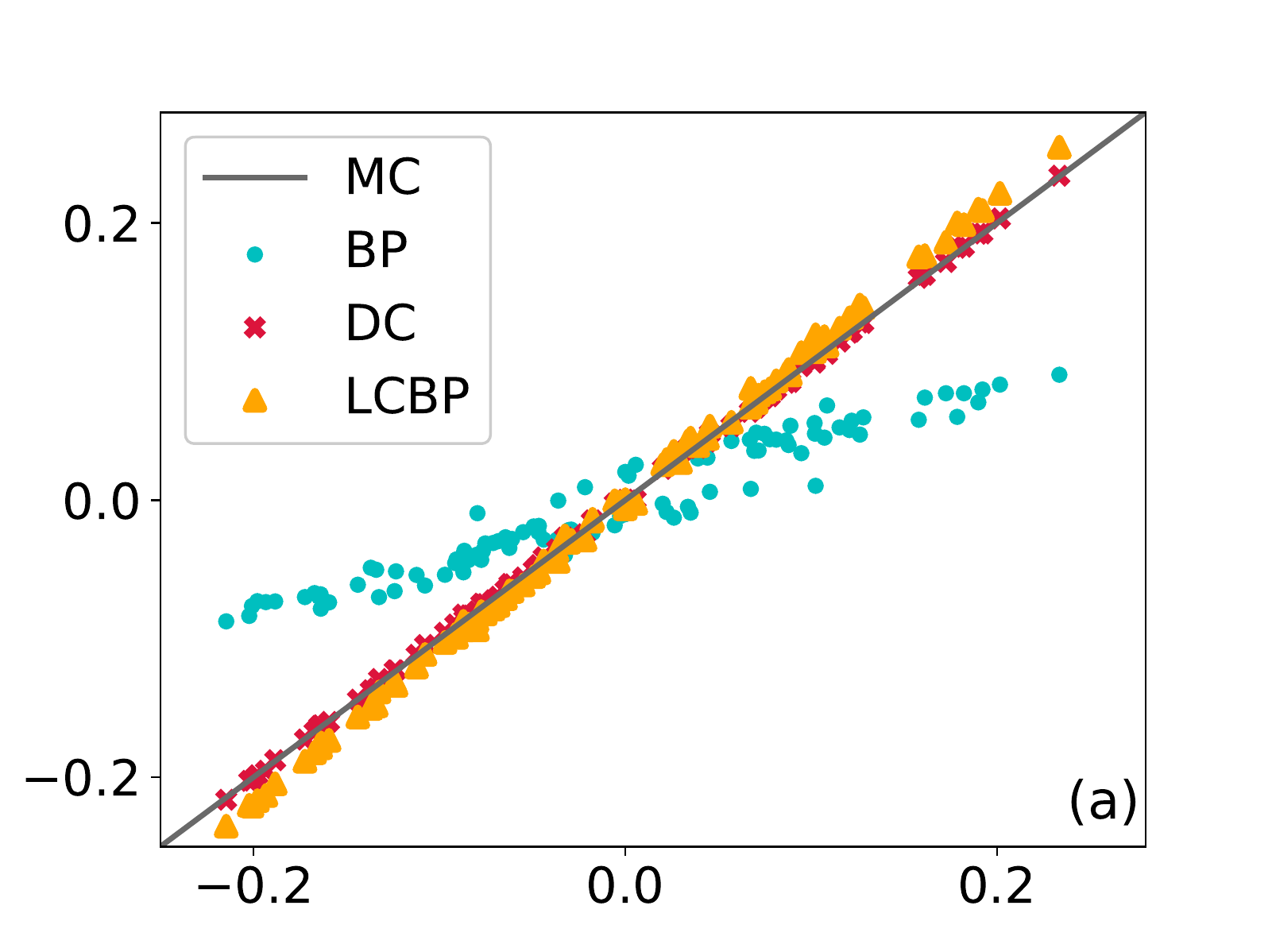}\includegraphics[width=0.33\textwidth]{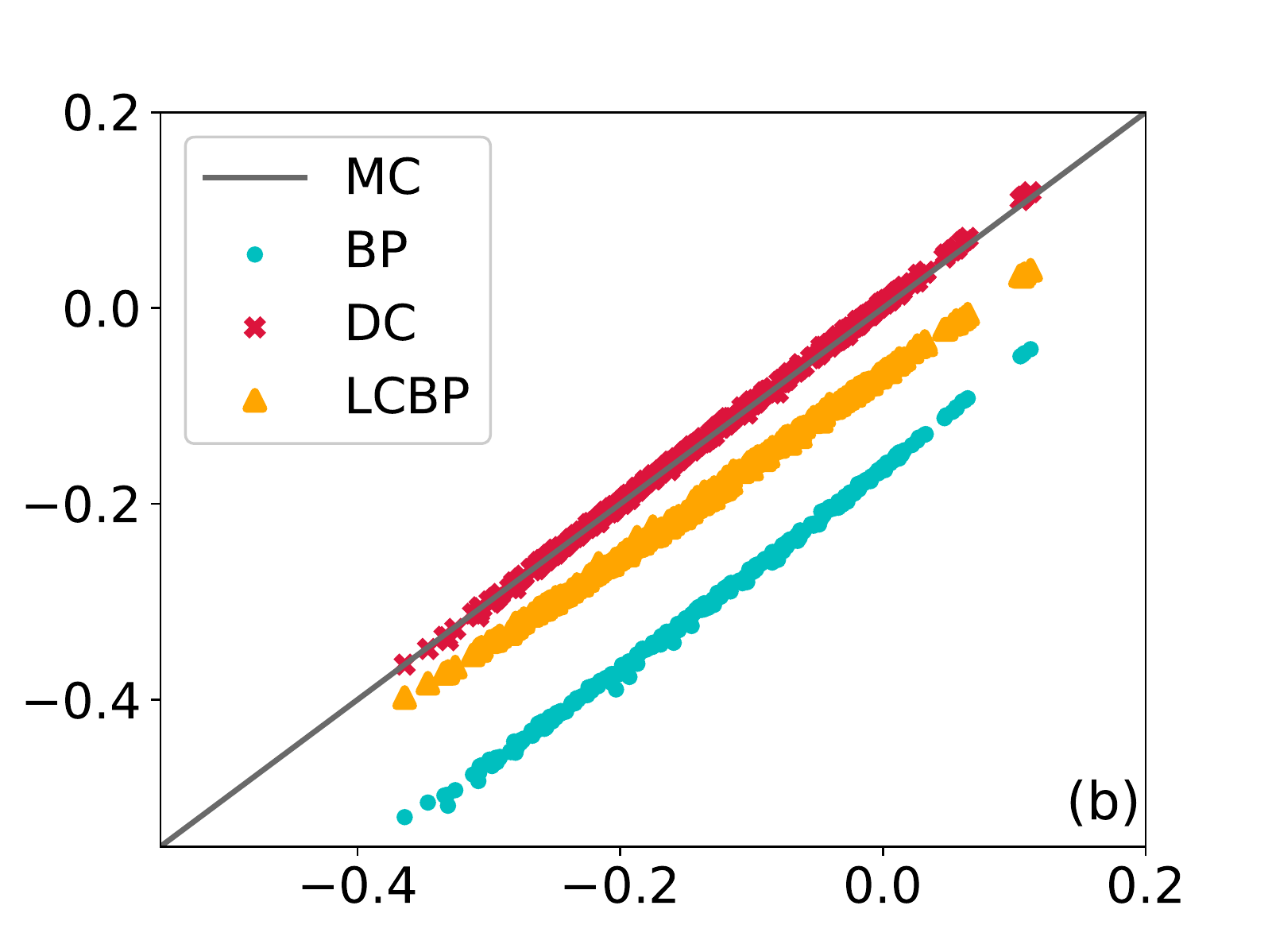}\includegraphics[width=0.33\textwidth]{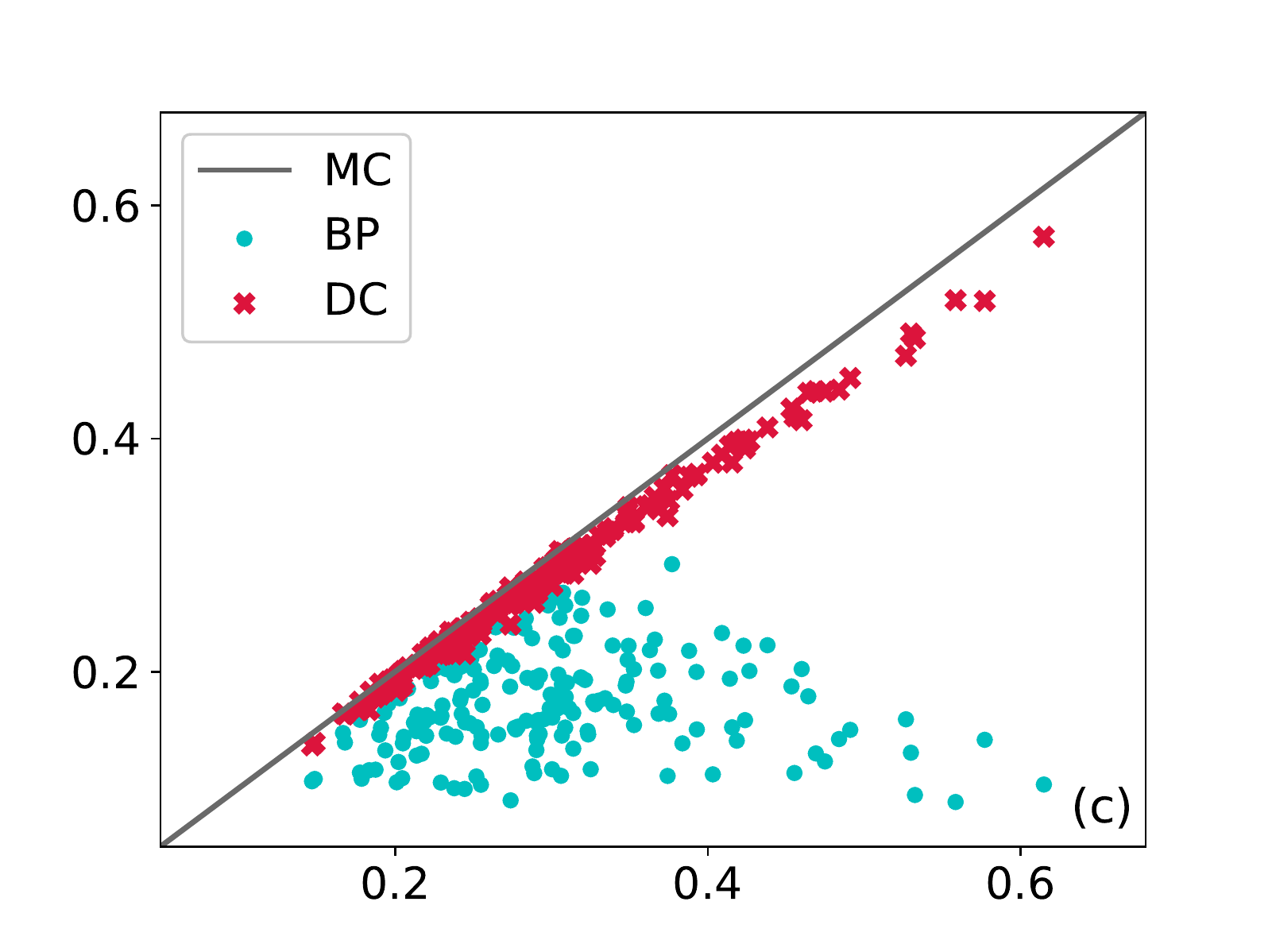}

\includegraphics[width=0.33\textwidth]{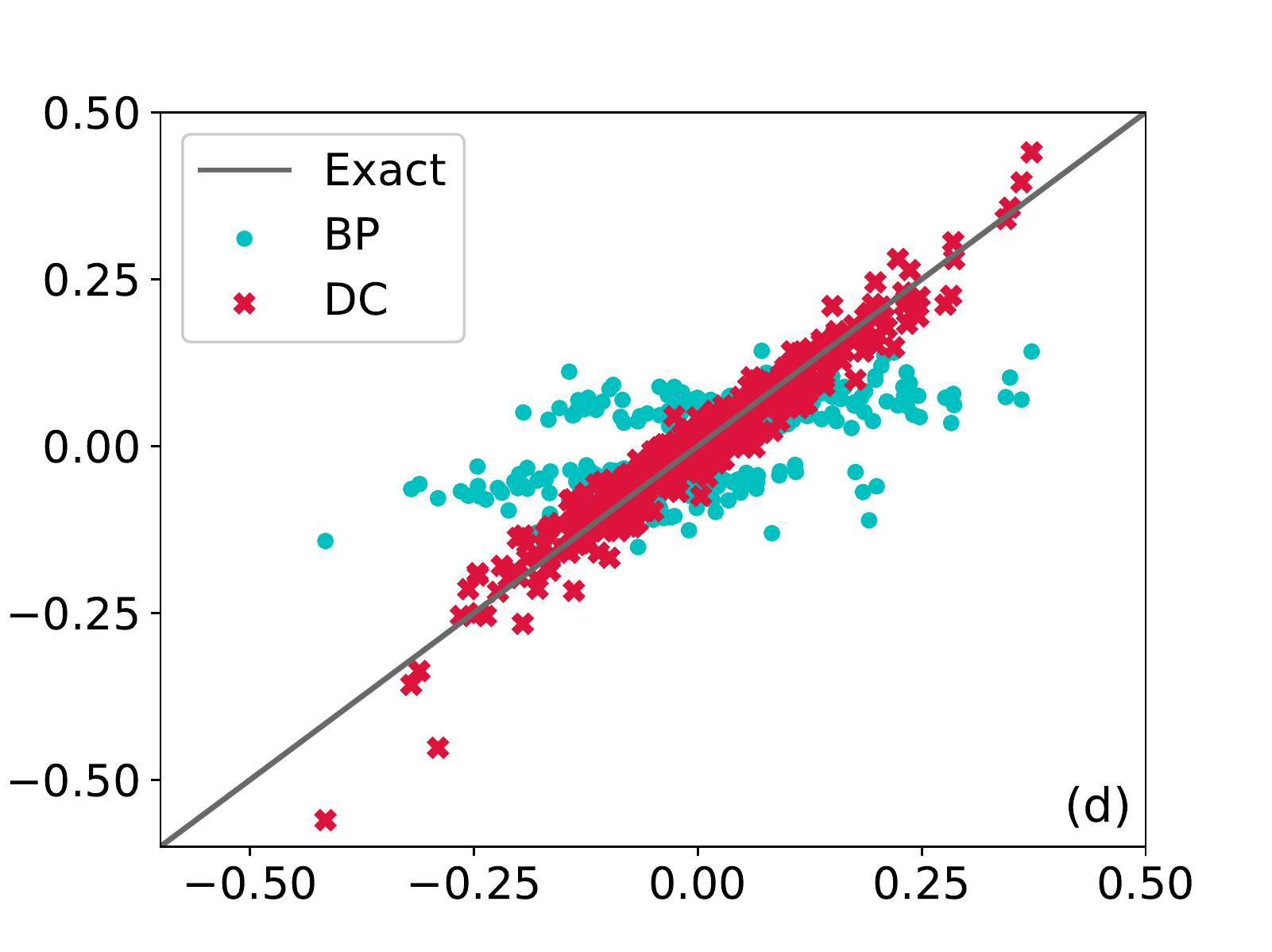}\includegraphics[width=0.33\textwidth]{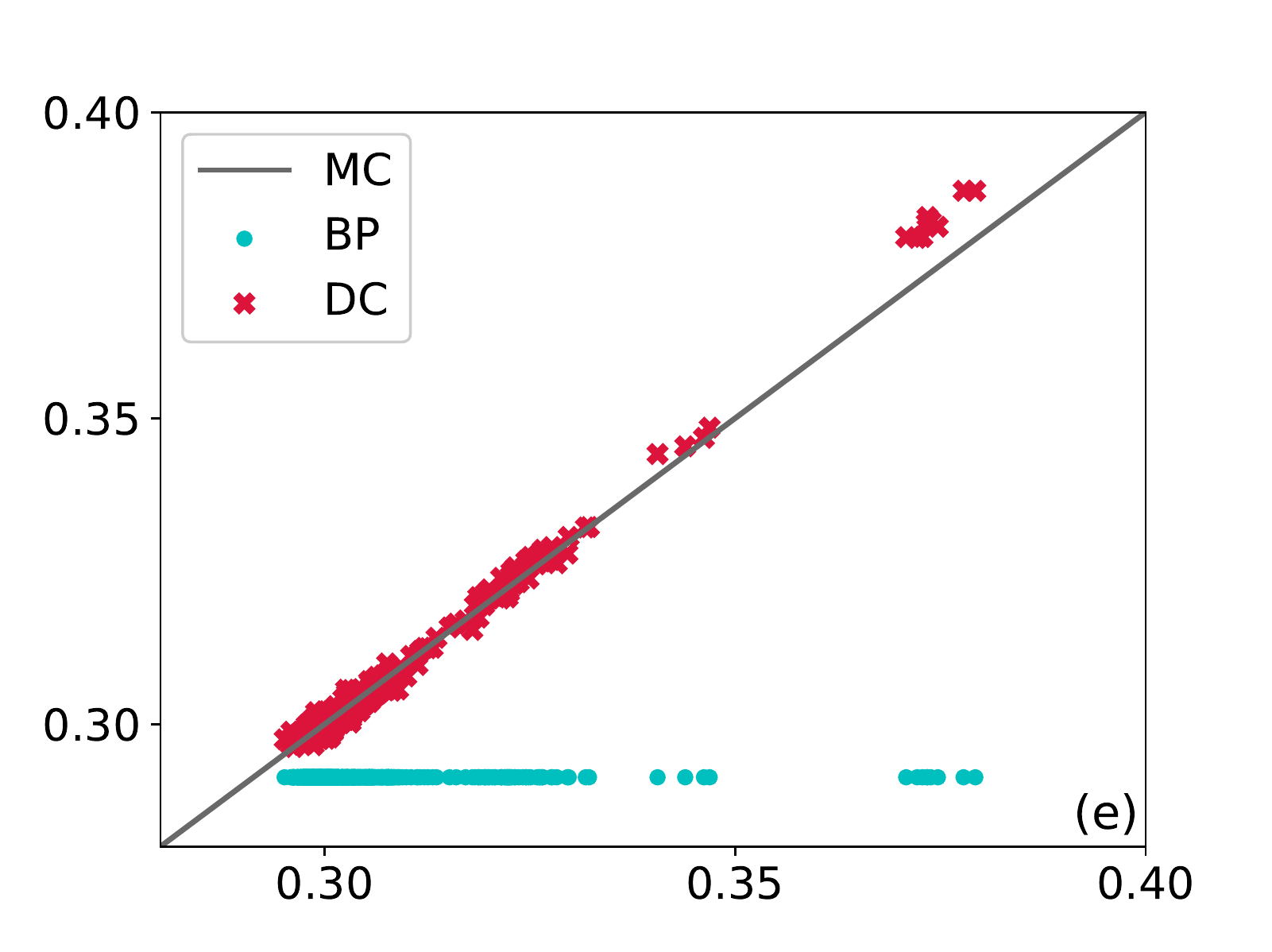}\includegraphics[width=0.33\textwidth]{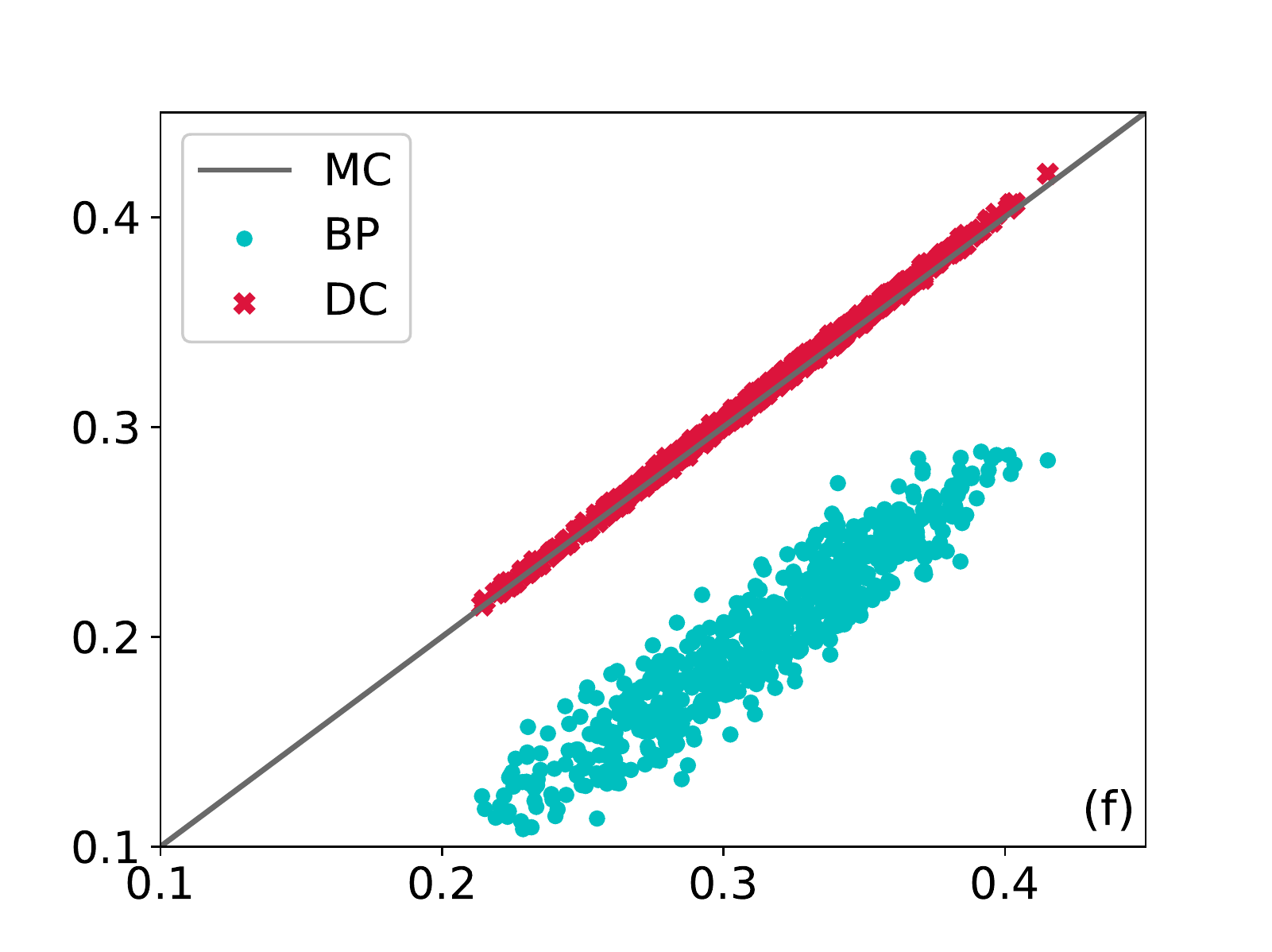}\caption{Comparison of DC, BP and LCBP on single-instances of disordered systems.
(a) Magnetizations of AntiFerromagnetic Ising Model on a triangular
lattice with $N=100,\left|E\right|=6N$, $J=-1$, $\beta=0.52$ and
random binary fields of $|h_{i}|=0.2$. (b) Magnetization of Ferromagnetic
Ising Model on a Random Regular (RR) Graph, $N=300$, degree $4$,
$\beta=0.35$, $J=1$ and random binary fields of $|h_{i}|=0.3$.
(c) Correlations of (heterogeneous) Ising Model on Barabasi-Albert
graph, $N=100$, $n_{0}=k=2$ without external fields (the solution
is found by using $\rho^{*}=0.95$ and it is divergent for $\rho>\rho^{*}$).
(d) Magnetizations of a random $4-$SAT instance at $\alpha=\frac{M}{N}=4$
at $\beta\to\infty$. (e) Correlations of (heterogeneous) Ferromagnetic
Ising Model on a Random Regular (RR) Graph, $N=300$, degree $4$
and $\beta=0.3$. (f) Correlations on a 3D hypercubic toroidal lattice
ferromagnetic (heterogeneous) Ising Model, $N=6^{3}$ and $\beta=0.21$
and no external fields. In heterogeneous ferromagnetic models, couplings
are drawn from a uniform distribution in $\left(0.5,1.5\right)$.
\label{fig:Heterogeneous}}
\end{figure*}
On acyclic factor graphs, the method converges in a finite number
of iterations and is exact, i.e. on a fixed point, $\left\langle x_{i}\right\rangle _{g}$
equals to the magnetization $\left\langle x_{i}\right\rangle _{p}$.
Therefore, as both the DC scheme and BP are exact on acyclic graphs,
their estimation of marginals must coincide. However, a deeper connection
can be pointed out. If a DC scheme applies zero covariances (e.g.
by setting $\rho=0$ in \eqref{eq:DCT}), on a DC fixed point on any
factor graph the quantities $m_{ai}=\tanh\left(\mu_{i}^{a}/\Sigma_{ii}^{a}\right)$
satisfy the Belief Propagation (BP) equations. Moreover, DC follows
dynamically a BP update. In particular, when equations converge, magnetizations
$m_{i}=\tanh\left(\mu_{i}/\Sigma_{ii}\right)$ are equal to the corresponding
belief magnetizations (Proof in \ref{subsec:RelationwithBP}).

Interestingly, the DC scheme can be thought of as a Gaussian pairwise
EP scheme with a modified consistency condition. The latter can be
obtained by keeping \eqref{eq:DCT} and replacing $\text{atanh}\left(x\right)$
in the RHS of \eqref{eq:DC} by the qualitatively similar $\frac{x}{1-x^{2}}$.
This of course renders the method inexact on acyclic graphs and turns
out to give generally a much worse approximation in many cases (See
\ref{subsec:Other-closure-equations}). In addition, as it also happens
with the EP method, Gaussian densities in factors $\psi_{a}$ can
be moved freely between factors (sharing the same variables) without
altering the approximation (details in \ref{subsec:Weight-gauge}).

\paragraph*{Numerical results}

We tested the method on the Ising model on many different scenarios
on heterogeneous systems, with a selection of results given in Figure
\ref{fig:Heterogeneous}. True values for magnetization and correlations
were computed approximately with long Monte-Carlo runs ($1\times10^{6}N-2\times10^{6}N$
Monte-Carlo Gibbs-sampling steps) for Ising models and with the exact
(exponential) trace for up to $N=28$ in the case of $k-$SAT. All
simulations have been performed with a damping parameter $\text{around }0.95$
to improve convergence. The DC method provides a substantial correction
to BP magnetizations and correlations in almost all cases ; it also
improves single-node marginal estimates w.r.t. Loop Corrected Belief
Propagation (LCBP) \citep{mooij2007loop} in several cases. LCBP simulations
were performed using the code provided in \citep{Mooij_libDAI_10}.
We underline that, despite the computational cost per iteration of
LCBP on bounded-degree graphs being $O\left(N^{2}\right)$, the prefactor
depends strongly on the degree distribution (with even exponential
scaling in some cases), also the number of iterations required to
converge is normally much larger than the one of DC. For instance,
for antiferromagnetic models (like the one shown in \ref{fig:Heterogeneous}(a)),
LCBP seems not to converge at smaller temperatures. 

\paragraph*{Homogeneous Ising model}

Consider a homogeneous ferromagnetic Ising Model with coupling constant
$J$ and external field $h^{ext}$ on on a $d$-dimensional lattice
with periodic (toroidal) boundary condition: because of the translational
invariance, all Gaussian factors $\phi_{a}$ are identical and the
covariance matrix admits an analytic diagonalization. Therefore it
is possible to estimate equilibrium observables through an analytical
DC scheme also in the thermodynamic limit. After some calculations
(see \ref{subsec:Simplified-DC-equations}), at a given temperature
the DC solution is found by solving the following system of 3 fixed
point equations $\sigma_{0}=\frac{m}{\text{atanh}m}$, $\sigma_{1}=\rho\frac{c-m^{2}}{1-m^{2}}\sigma_{0}$
and $y=m\left(\gamma_{0}+\gamma_{1}\right)$ in the Gaussian parameters
$y,\gamma_{0},\gamma_{1}$ where $m=\left\langle x_{i}\right\rangle _{g^{\left(a\right)}}$,
$c=\left\langle x_{i}x_{j}\right\rangle _{g^{\left(a\right)}}$ are
the moments computed under the ``tilted'' distribution $g^{\left(a\right)}$,
and $\sigma_{0},\sigma_{1},\gamma_{0},\gamma_{1}$ equal respectively
$\Sigma_{ii},\Sigma_{ij},\left(2d\right)^{-1}\left(\Sigma^{-1}\right)_{ii},\left(\Sigma^{-1}\right)_{ij}$
for $i,j$ two first lattice neighbors. Defining $R_{d}\left(r\right)=\frac{1}{2}\int_{0}^{\infty}dte^{-dt}\mathcal{I}_{0}^{d}\left(rt\right)$,
where $\mathcal{I}_{0}$ is the modified Bessel function of the first
kind of order $0$, and after some straightforward algebraic manipulations,
we finally obtain the following equations (here $h^{ext}=0$ for
simplicity) for variable $\beta,m,r=\gamma_{1}\gamma_{0}^{-1}$ 
\begin{align}
\beta= & \text{atanh}\left[\frac{1}{\rho}k_{d}\left(r\right)\left(1-m^{2}\right)+m^{2}\right]-g_{d}\left(r\right)\frac{\text{atanh}m}{m}-\text{atanh}\left[\tanh^{2}\left(f_{d}\left(r\right)\text{atanh}m\right)\right]\label{eq:DCIsing1}\\
m= & \tanh\left[f_{d}\left(r\right)\text{atanh}m+\text{atanh}\left(\tanh\left(\beta+g_{d}\left(r\right)\frac{\text{atanh}m}{m}\right)\tanh\left(f_{d}\left(r\right)\text{atanh}m\right)\right)\right]\label{eq:DCIsing2}
\end{align}
where $k_{d}\left(r\right)=\frac{1-2dR_{d}\left(r\right)}{r2dR_{d}\left(r\right)}$,
$g_{d}\left(r\right)=\frac{k_{d}\left(r\right)}{1-k_{d}\left(r\right)^{2}}+rR_{d}\left(r\right)$,
$f_{d}\left(r\right)=\frac{1}{1+k_{d}\left(r\right)}-\left(r+1\right)R_{d}\left(r\right)$.
Substituting \eqref{eq:DCIsing1} into \eqref{eq:DCIsing2} we get
a single equation for $m,r$, allowing for a parametric solution $m\left(r\right),\beta\left(r\right)$. 

The maximum value of $\beta$ for which a paramagnetic solution exists
can be analytically derived by substituting $m=0$ and taking $\sup_{-1<r\leq0}\beta\left(r\right)$
from \eqref{eq:DCIsing1}. For $d\geq3$ \footnote{$R_{2}\left(-1\right)$ is divergent and the maximum for $d=2$ is
located in $-1<r^{*}<0$. The solution for $d=2$ is thus qualitatively
different, see \ref{subsec:D=00003D2}.} the maximum is realized at $r=-1$, obtaining: 
\begin{align}
\beta_{p} & =\text{ath}\left(1-\frac{1}{x_{d}}\right)-\frac{x_{d}\left(x_{d}-1\right)}{2x_{d}-1}+\frac{x_{d}}{2d}\label{eq:betap}
\end{align}
where $x_{d}=2dR_{d}\left(-1\right)$. Values of $\beta_{p}$ for
various dimensions $d$ are reported in table I. The paramagnetic
solution is stable in the full range $0\leq\beta<\beta_{p}$ for $d\geq3$.

Expanding \eqref{eq:betap} in powers of $d^{-1}$ we get $\frac{1}{2d\beta_{p}}=1-\frac{1}{2}d^{-1}-\frac{1}{3}d^{-2}-\frac{13}{24}d^{-3}-\frac{979}{720}d^{-4}-\frac{2039}{480}d^{-5}+O\left(d^{-6}\right)$
which is exact up to the $d^{-4}$ order (the correct coefficient
of $d^{-5}$ is $-\frac{2009}{480}$) \citep{fisher_ising_1964}.
For comparison, nMF is exact up to the $d^{0}$ order, BP is exact
up to the $d^{-1}$ order, and \emph{Loop Corrected Bethe }(LCB, \citep{montanari_how_2005})
and Plaquette CVM (PCVM, \citep{dominguez_gauge-free_2017}) are exact
up to the $d^{-2}$ order.

\begin{table}
\begin{centering}
\begin{tabular}{cr@{\extracolsep{0pt}.}lr@{\extracolsep{0pt}.}lr@{\extracolsep{0pt}.}lr@{\extracolsep{0pt}.}lr@{\extracolsep{0pt}.}lr@{\extracolsep{0pt}.}l}
$d$ & \multicolumn{2}{c}{$\beta_{BP}$} & \multicolumn{2}{c}{$\beta_{PCVM}$} & \multicolumn{2}{c}{$\beta_{LCB}$} & \multicolumn{2}{c}{$\beta_{m}$} & \multicolumn{2}{c}{$\beta_{p}$} & \multicolumn{2}{c}{$\beta_{c}$}\tabularnewline
\hline 
2 & 0&34657 & \textbf{0}&\textbf{412258} & \multicolumn{2}{c}{-} & 0&388448 & 0&37693 & 0&440687\tabularnewline
3 & 0&20273 & 0&216932 & 0&238520 & 0&218908 & \textbf{0}&\textbf{222223} & 0&221654(6)\tabularnewline
4 & 0&14384 & 0&148033 & 0&151650 & 0&149835 & \textbf{0}&\textbf{149862} & 0&14966(3)\tabularnewline
5 & 0&11157 & 0&113362 & 0&114356 & \textbf{0}&\textbf{113946} & \textbf{0}&\textbf{113946} & 0&11388(3)\tabularnewline
6 & 0&09116 & 0&092088 & 0&092446 & \textbf{0}&\textbf{092304} & \textbf{0}&\textbf{092304} & (0&0922530)\tabularnewline
\end{tabular}
\par\end{centering}
\caption{\label{tab:Values}Critical values obtained with different approximation
schemes of the inverse temperature $\beta$ marking the onset of spontaneous
magnetization in the homogeneous Ising model on infinite $d$-dimensional
hypercubic lattices. The values of $\beta_{BP}$, $\beta_{PCVM}$
and $\beta_{LCB}$ respectively refer to the Bethe-Peierls, Plaquette
Cluster Variational Method \citep{dominguez_gauge-free_2017} and
Loop Corrected Bethe \citep{montanari_how_2005} approximations, while
$\beta_{p}$ and $\beta_{m}$ respectively correspond to the maximum
$\beta$ of the paramagnetic DC solution and the minimum $\beta$
of the magnetized DC solution, $\beta_{c}$ indicates the currently
best known approximation up to numerical accuracy (\citep{dominguez_gauge-free_2017}
for $d\protect\leq5$, \citep{fisher_ising_1964} for $d=6$). Results
in bold indicate the closest value to the last column.}
\end{table}
The minimum value of $\beta$ for which a magnetized solution exists
can be also computed by seeking a point with $\frac{d\beta}{dr}=0$
with the complication that $m$ is defined implicitly by \eqref{eq:DCIsing1}-\eqref{eq:DCIsing2}
(details in \ref{subsec:betam}). The resulting equation has a single
solution that has been numerically computed and shown in Table \eqref{tab:Values}
as $\beta_{m}$. It turns out to be smaller but always very close
to $\beta_{p}$ and coincident up to numerical precision for $d\geq5$.
Note that for inverse temperatures in the (albeit small) range $\beta_{m}<\beta<\beta_{p}$,
the DC approximation has both magnetized and a paramagnetic stable
solutions, suggesting a phase coexistence that should be absent in
the real system \citep{lebowitz_coexistence_1977}. 

\paragraph*{Discussion}

We proposed a general approximation scheme for distributions of discrete
variables that show interesting properties, including being exact
on acyclic factor graphs and providing a form of loop corrections
on graphs with cycles. 

In the same spirit as for PCVM and LCBP, DC approximation can be thought
of as a method to correct the cavity independence (or absence of cycles)
assumption in the Bethe-Peierls approximation. Whereas PCVM deals
only with local (short) cycles, it is true that LCBP and DC both attempt
to correct for arbitrarily long cycles in the interaction graph. However,
they do so through crucially different approaches. Loop Corrected
Belief Propagation (LCBP) works by computing several BP fixed points
(one for each cavity distribution in which one node and all the factors
connected to it are removed) and then imposing consistency over single-node
beliefs among them. Therefore, for each cavity distribution it computes
fixed points by still assuming a tree-factorization, i.e. by neglecting
correlations coming from other cycles in the graph. So it computes
a higher order approximation by relying on lower order ones (on a
modified/simplified) interaction graph. In this sense, it can be considered
as a first-order correction to BP and indeed it improves BP estimates
of single-node marginals, as shown in \ref{fig:Heterogeneous}. In
this perspective, DC can be considered as a new approximation in which
all 2-points cavity correlations are taken into account (of course,
in an approximate way, through a Gaussian distribution), in a single
self-consistent set of equations in which correlations arise simultaneously
from all cycles in the graph.

The method can be in part solved analytically on homogeneous systems
such as finite dimensional hypercubic lattices with periodic conditions.
Analytical predictions from the model show a number of interesting
features that are not shared by other mean-field approaches: the method
provides finite size corrections which are in close agreement with
numerical simulations; the paramagnetic solution exists only for $\beta<\beta_{p}$
(in PCVM and BP, the paramagnetic solution exists for all $\beta\geq0$,
although it stops being stable at a finite value of $\beta$); it
can capture some types of heterogeneity where the Bethe-Peierls approximation
can not (such as in RR graphs). Numerical simulations are in good
agreement on different models, including random-field Ising with various
topologies and random $k-$SAT. On lattices, the method could in principle
be rendered more accurate by taking into account small loops explicitly.
The DC scheme can be extended for models with $q-$states variables,
by replacing each of them with $q$ binary variables. Again, in this
setup it is possible to get a similar set of closure equations that
are exact on acyclic graphs and recover BP fixed points on any graph
when neglecting cavity correlations. This will be object of future
research. 
\begin{acknowledgments}
We warmly thank F. Ricci-Tersenghi, A. Lage-Castellanos, A. P. Muntoni,
I. Biazzo, A. Fendrik and L. Romanelli for interesting discussions.
AB and GC thank Universidad Nacional de General Sarmiento for hospitality.
The authors acknowledge funding from EU Horizon 2020 Marie Sklodowska-Curie
grant agreement No 734439 (INFERNET) and PRIN project 2015592CTH.
\end{acknowledgments}

\appendix

\section{General properties of DC scheme\label{sec:General-propertiesDC}}

\subsection{Relation with the Bethe Approximation (BP)\label{subsec:RelationwithBP}}

On acyclic graphs, both the DC scheme and BP are exact and thus they
must coincide on their computation of marginals. However, a deeper
connection can be pointed out. BP fixed point equations are
\begin{align}
m_{ai}\left(x_{i}\right)\propto & \sum_{\boldsymbol{x}_{a}}\psi_{a}\left(x_{a}\right)\prod_{j\in a\setminus i}m_{ja}\left(x_{j}\right)\label{eq:bp1}\\
m_{ia}\left(x_{i}\right)\propto & \prod_{b\in i\setminus a}m_{bi}\left(x_{i}\right)\label{eq:bp2}\\
m_{i}\left(x_{i}\right)\propto & \prod_{b\in i}m_{bi}\left(x_{i}\right)\label{eq:bp-marg}
\end{align}

\begin{thm}
If (H1) the DC scheme applies zero covariances or (H2) the factor
graph is acyclic, $m_{ia}\left(x_{i}\right)\propto g^{-a}\left(x_{i}\right)$
satisfies \eqref{eq:bp1}-\eqref{eq:bp2}. Moreover, the updates follow
dynamically BP updates. In particular, if equations converge, approximate
marginals $g\left(x_{i}\right)$ are proportional to belief magnetizations
\eqref{eq:bp-marg}. 
\end{thm}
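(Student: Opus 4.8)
The plan is to pass to the natural (information) parametrization of the Gaussians and re-encode every object as a ``spin message'', exploiting the elementary fact that for any Gaussian $q$ the density $q(x_i)$ restricted to $x_i\in\{-1,1\}$ is, up to an $x_i$-independent constant and after using $x_i^2=1$, proportional to $e^{h_i^q x_i}$ with $h_i^q=\mu_i^q/\Sigma_{ii}^q$ the ratio of marginal mean to marginal variance. I would set $m_{ia}(x_i)\propto g^{-a}(x_i)|_{\{-1,1\}}$, $m_i(x_i)\propto g(x_i)|_{\{-1,1\}}$ and $m_{ai}:=m_i/m_{ia}$, and then verify \eqref{eq:bp1}, \eqref{eq:bp2}, \eqref{eq:bp-marg} for these; since all BP relations are stated up to proportionality, normalizations can be ignored. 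Note $m_{ia}(x_i)=e^{h_i^{-a}x_i}$ is strictly positive, so $m_{ai}$ is always well defined.

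The single structural input I need is a factorization lemma: under (H1) or (H2) the block marginal of the cavity measure factorizes, $g^{-a}(\boldsymbol{x}_a)=\prod_{j\in\partial a}g^{-a}(x_j)$. Under (H1) the Gaussian factors carry no pairwise coupling, i.e.\ every $\Sigma^a_{ij}=0$ (e.g.\ $\rho=0$ in \eqref{eq:DCT}); then every factor precision, the global precision $\boldsymbol{\Sigma}^{-1}=\sum_a(\boldsymbol{\Sigma}^a)^{-1}$ (embedded in the full index set), and each cavity precision are diagonal, so $g$ and every $g^{-a}$ are products of univariate Gaussians and the lemma is immediate. Under (H2) the factor node $a$ is a cut vertex of the tree $G$, so the variables in $\partial a$ lie in pairwise distinct connected components of $G\setminus a$; since the precision of $g^{-a}$ inherits the sparsity pattern of $G\setminus a$, variables in distinct components are uncorrelated and hence, being Gaussian, independent, which again gives the lemma.

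With the lemma in hand, \eqref{eq:bp1} becomes essentially the content of \eqref{eq:DC}. From $g^{(a)}\propto g^{-a}\psi_a$ on $X_a$ and the lemma, $g^{(a)}(\boldsymbol{x}_a)\propto\psi_a(\boldsymbol{x}_a)\prod_{j\in\partial a}m_{ja}(x_j)$ on $\{-1,1\}^{\partial a}$, so marginalizing onto $x_i$ gives $g^{(a)}(x_i)\propto m_{ia}(x_i)\sum_{\boldsymbol{x}_{a\setminus i}}\psi_a(\boldsymbol{x}_a)\prod_{j\in a\setminus i}m_{ja}(x_j)$. On the other hand \eqref{eq:DC} states exactly that $\langle x_i\rangle_{g^{(a)}}=\langle x_i\rangle_{g^{(i)}}=\tanh(\mu_i/\Sigma_{ii})$, and since $g^{(a)}(x_i)$ and $g^{(i)}(x_i)\propto m_i(x_i)$ are both two-point laws on $\{-1,1\}$, equality of their means forces them to agree up to normalization; dividing the two expressions for $g^{(a)}(x_i)$ by $m_{ia}(x_i)$ then yields $m_{ai}\propto\sum_{\boldsymbol{x}_{a\setminus i}}\psi_a(\boldsymbol{x}_a)\prod_{j\in a\setminus i}m_{ja}(x_j)$, which is \eqref{eq:bp1}. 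For \eqref{eq:bp2}–\eqref{eq:bp-marg} it suffices to check, in field coordinates, that $\sum_{b\in i}h_i^{-b}=(|\partial i|-1)\,h_i^{g}$ (equivalently $\prod_{b\in i}m_{ib}\propto m_i^{|\partial i|-1}$). Under (H1) this is additivity of diagonal precisions: $h_i^{g}=\sum_{b\in i}h_i^{b}$ for $h_i^b=\mu_i^b/\Sigma_{ii}^b$ the field of $\phi_b$ at $i$, and $h_i^{-b}=h_i^{g}-h_i^{b}$. Under (H2) it is exactness of Gaussian belief propagation on the tree, which gives $g(x_i)\propto\prod_{b\in i}\nu_{bi}(x_i)$ and $g^{-b}(x_i)\propto\prod_{c\in i\setminus b}\nu_{ci}(x_i)$ for the exact Gaussian messages $\nu_{bi}$; taking ratios identifies $m_{bi}\propto\nu_{bi}|_{\{-1,1\}}$ and $\prod_{b\in i\setminus a}m_{bi}\propto g^{-a}(x_i)\propto m_{ia}$, i.e.\ \eqref{eq:bp2}, with \eqref{eq:bp-marg} following on multiplying by $m_{ai}$. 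Finally, ``following BP dynamically'' should come from repeating this bookkeeping for one sweep of the DC iteration: updating $\phi_a$ only changes the field pushed onto each $i\in\partial a$, and in the factorized/tree setting the induced map on the fields $h_i^{-a}$ is, after the change of variable $m_{ia}(x_i)=e^{h_i^{-a}x_i}/(2\cosh h_i^{-a})$, exactly the BP update \eqref{eq:bp1}–\eqref{eq:bp2}, so matched initializations give matched trajectories and fixed points.

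I expect the main obstacle to be the factorization lemma in case (H2): one must argue cleanly that deleting a single factor node from an acyclic factor graph separates \emph{all} of its variable neighbours, and that this separation is precisely what annihilates the within-block cavity covariances. Pinpointing the one place where \eqref{eq:DC} is genuinely used (to close \eqref{eq:bp1}) also requires care, whereas the field-additivity computation and the dynamical claim are routine once the Gaussian-field/BP-message dictionary is fixed.
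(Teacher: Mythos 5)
Your proof is correct and follows essentially the same route as the paper's: the factorization of the cavity marginal $g^{-a}\left(\boldsymbol{x}_{a}\right)\propto\prod_{j\in\partial a}m_{ja}\left(x_{j}\right)$ under either hypothesis, the observation that \eqref{eq:DC} forces $g\left(x_{i}\right)\propto g^{\left(a\right)}\left(x_{i}\right)$ on $\left\{ -1,1\right\} $ (which yields \eqref{eq:bp1}), and a case split on (H1)/(H2) for \eqref{eq:bp2}. The differences are cosmetic: you supply an explicit justification of the factorization lemma (which the paper asserts without proof) and handle the acyclic case of \eqref{eq:bp2} via exactness of Gaussian BP on trees rather than the paper's explicit chain of marginalization integrals in \eqref{eq:dcbp2}.
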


\begin{proof}
In the either hypothesis (H1 or H2) , $g^{-a}\left(x_{a}\right)\propto\prod_{j\in\partial a}m_{ja}\left(x_{j}\right)$.
Define $m_{ai}\left(x_{i}\right)\propto\frac{g\left(x_{i}\right)}{m_{ia}\left(x_{i}\right)}$.
We obtain
\begin{align}
m_{ai}\left(x_{i}\right) & \propto\frac{1}{m_{ia}\left(x_{i}\right)}\int dx_{a\setminus i}g\left(x_{a}\right)\nonumber \\
 & \propto\int dx_{a\setminus i}\prod_{j\in\partial a\setminus i}m_{ja}\left(x_{j}\right)\phi_{a}\left(x_{a}\right)\label{eq:bpgauss}
\end{align}
Thanks to \eqref{eq:DC}, $g\left(x_{i}\right)\propto g^{\left(a\right)}\left(x_{i}\right)$
when $x_{i}\in\left\{ -1,1\right\} $ (and this is precisely the purpose
of \eqref{eq:DC}). In particular, for $x_{i}\in\left\{ -1,1\right\} $
we get also
\begin{align}
m_{ai}\left(x_{i}\right) & \propto\frac{1}{m_{ia}\left(x_{i}\right)}g^{\left(a\right)}\left(x_{i}\right)\nonumber \\
 & =\frac{1}{m_{ia}\left(x_{i}\right)}\sum_{x_{a\setminus i}}g^{-a}\left(x_{a}\right)\psi_{a}\left(x_{a}\right)\nonumber \\
 & =\sum_{x_{a\setminus i}}\prod_{j\in\partial a\setminus i}m_{ja}\left(x_{j}\right)\psi_{a}\left(x_{a}\right)\label{eq:dcbp1}
\end{align}
which is Eq. \eqref{eq:bp1}. Eq. \eqref{eq:bp2} is also verified
in either hypothesis:
\begin{enumerate}
\item Factorized case: if $\phi_{a}\left(x_{a}\right)=\prod_{i\in\partial a}\phi_{a}\left(x_{i}\right)$,
then clearly $m_{ai}\left(x_{i}\right)\propto\phi_{a}\left(x_{i}\right)$
and $m_{ia}\left(x_{i}\right)\propto\prod_{b\in\partial i\setminus a}m_{bi}\left(x_{i}\right)$
\item Acyclic case: if $T_{b}$ denotes the set of factors in the connected
component of $b$ once $i$ is removed, we get
\begin{align}
m_{ia}\left(x_{i}\right) & \propto g^{-a}\left(x_{i}\right)\nonumber \\
 & \propto\int dx_{-i}\prod_{b\in\partial i\setminus a}\phi_{b}\left(x_{b}\right)\prod_{c\in T_{b}\setminus b}\phi_{c}\left(x_{c}\right)\nonumber \\
 & \propto\prod_{b\in\partial i\setminus a}\int dx_{b\setminus i}\phi_{b}\left(x_{b}\right)\prod_{j\in\partial b\setminus i}g^{-b}\left(x_{j}\right)\nonumber \\
 & \propto\prod_{b\in\partial i\setminus a}\int dx_{b\setminus i}\phi_{b}\left(x_{b}\right)\prod_{j\in\partial b\setminus i}m_{bj}\left(x_{j}\right)\nonumber \\
 & \propto\prod_{b\in\partial i\setminus a}m_{bi}\left(x_{i}\right)\label{eq:dcbp2}
\end{align}
where the last line follows from \eqref{eq:bpgauss}.
\end{enumerate}
\end{proof}

\subsection{Relation with EP\label{subsec:Relation-with-EP}}

The DC scheme can be thought of a modified Gaussian EP scheme for
factors \footnote{With the apparent ambiguity of the appearance of $\delta^{k}$ terms
in $p\left(x\right)$, which is actually really not as problematic
as it may seem as the distribution is defined up to a normalization
factor; more precisely consider 
\[
p\left(x\right)=\lim_{\sigma\to0}\frac{1}{Z_{\sigma}}\prod_{i\sim j}\hat{\psi}_{\sigma,ij}\left(x_{i},x_{j}\right)
\]
for factors $\hat{\psi}_{\sigma,ij}\left(x_{i},x_{j}\right)=\psi_{ij}\left(x_{i},x_{j}\right)\left(\mathcal{N}\left(x_{i};1,\sigma\right)+\mathcal{N}\left(x_{i};-1,\sigma\right)\right)\left(\mathcal{N}\left(x_{j};1,\sigma\right)+\mathcal{N}\left(x_{j};-1,\sigma\right)\right)$.} 
\[
\hat{\psi}_{ij}\left(x_{i},x_{j}\right)=\psi_{ij}\left(x_{i},x_{j}\right)\left(\delta\left(x_{i}+1\right)+\delta\left(x_{i}-1\right)\right)\left(\delta\left(x_{i}+1\right)+\delta\left(x_{i}-1\right)\right)
\]
Classic EP equations in this context can be obtained by replacing
$\text{atanh}\left\langle x_{i}\right\rangle _{g^{\left(a\right)}}$
in the RHS of \eqref{eq:DC} by the qualitatively similar function
$\frac{\left\langle x_{i}\right\rangle _{g^{\left(a\right)}}}{1-\left\langle x_{i}\right\rangle _{g^{\left(a\right)}}^{2}}$,
but this of course invalidates Theorems 1-2 and turns out to give
a much worse approximation in general. 

\subsection{Weight gauge\label{subsec:Weight-gauge}}

One interesting property common to both DC and EP scheme concerns
the possibility to move freely gaussian densities in and out the exact
factors $\psi_{a}\left(x_{a}\right).$ Let $\rho_{a}\left(x_{a}\right)$
be Gaussian densities;
\begin{align*}
p\left(x\right) & \propto g\left(x\right)\prod_{a}\psi_{a}\left(x_{a}\right)\\
q\left(x\right) & \propto g\left(x\right)\prod_{a}\phi_{a}\left(x_{a}\right)
\end{align*}
and $q$ a Gaussian EP or DC approximation. We have

\begin{align}
p^{\left(a\right)}\left(x_{a}\right) & \propto\psi_{a}\left(x_{a}\right)\int dx_{-a}\frac{g\left(x\right)\prod_{b}\phi_{b}\left(x_{b}\right)}{\phi_{a}\left(x_{a}\right)}\label{eq:DC-1}\\
 & \propto\frac{\psi_{a}\left(x_{a}\right)}{\rho_{a}\left(x_{a}\right)}\int dx_{-a}\frac{\left[g\left(x\right)\prod_{b}\rho_{b}\left(x_{b}\right)\right]\prod_{b}\phi_{b}\left(x_{b}\right)/\rho_{b}\left(x_{b}\right)}{\phi_{a}\left(x_{a}\right)/\rho_{a}\left(x_{a}\right)}\\
q\left(x_{a}\right) & =\int dx_{-a}g\left(x\right)\prod_{b}\phi_{b}\left(x_{b}\right)\\
 & =\int dx_{-a}\left[g\left(x\right)\prod_{b}\rho_{b}\left(x_{b}\right)\right]\prod_{b}\phi_{b}\left(x_{b}\right)/\rho_{b}\left(x_{b}\right)
\end{align}

As DC and EP algorithms impose constraints between $p^{\left(a\right)}\left(x_{a}\right)$
and $q\left(x_{a}\right)$, any approximating family $\left\{ \phi_{a}\right\} $
for $\left(g,\left\{ \psi_{a}\right\} \right)$ leads to an equivalent
family $\left\{ \phi_{a}/\rho_{a}\right\} $ for $\left(g'=g\prod_{b}\rho_{b},\left\{ \psi'_{a}=\psi_{a}/\rho_{a}\right\} \right)$
for arbitrary factors $\rho_{a}$.

\subsection{Other closure equations\label{subsec:Other-closure-equations}}

Eq. \ref{eq:DC} is the only condition needed to make the approximation
scheme exact on tree-graphs. In principle one could complement it
with any other condition in order to obtain a well-determined system
of equations and unknowns in the factor parameters. In this work we
tried other complementary closure equations (including matching of
covariance matrix, constrained Kullback-Leiber Divergence minimization,
matching of off-diagonal covariances, in addition to \ref{eq:DC}).
However, we found out that \ref{eq:DCT} were experimentally performing
uniformly better on all the cases we analyzed.

\section{Homogeneous Ising Model\label{sec:Homogeneous-Ising-Model}}

In a homogeneous ferromagnetic Ising Model with hamiltonian $\mathcal{H}=-J\sum_{\langle i,j\rangle}x_{i}x_{j}-h^{ext}\sum_{i}x_{i}$
defined on a $d$-dimensional hypercubic lattice with periodic (toroidal)
boundary condition, all Gaussian factors $\phi_{a}$ are identical
and the DC equations (\ref{eq:DCT}\ref{eq:DC}) at a given inverse
temperature $\beta$ read: 
\begin{align}
\sigma_{0} & =\frac{m}{\text{atanh}m}\nonumber \\
\sigma_{1} & =\rho\frac{c-m^{2}}{1-m^{2}}\sigma_{0}\label{eq:startingDC}\\
y & =m\left(\gamma_{0}+\gamma_{1}\right)\nonumber 
\end{align}
The DC solution is found by solving the above system of 3 fixed-point
equations in the Gaussian parameters $y,\gamma_{0},\gamma_{1}$ where
$\sigma_{0},\sigma_{1},\gamma_{0},\gamma_{1}$ equal respectively
$\Sigma_{ii}$, $\Sigma_{ij}$, $\left(2d\right)^{-1}$$\left(\Sigma^{-1}\right)_{ii}$,
$\left(\Sigma^{-1}\right)_{ij}$ for $i,j$ two first lattice neighbors.
Here $m=\left\langle x_{i}\right\rangle _{g^{\left(a\right)}}$ and
$c=\left\langle x_{i}x_{j}\right\rangle _{g^{\left(a\right)}}$ are
the moments computed under the distribution $g^{\left(a\right)}$:
\begin{align*}
m & =\tanh\left[z+\text{atanh}\left(\tanh\Gamma\tanh z\right)\right]\\
c & =\tanh\left[\Gamma+\text{atanh}\left(\tanh^{2}z\right)\right]
\end{align*}
where 
\begin{align*}
z & =\frac{\beta h^{ext}}{2d}+\left(\frac{1}{\sigma_{0}+\sigma_{1}}\frac{1}{\gamma_{0}+\gamma_{1}}-1\right)y\\
\Gamma & =\beta J+\frac{\sigma_{1}}{\sigma_{0}^{2}-\sigma_{1}^{2}}+\gamma_{1}
\end{align*}
The matrix $\Sigma$ is the gaussian covariance matrix whose inverse
is parametrized as follows:
\[
\Sigma^{-1}=\mathcal{S}^{\left(d\right)}=2d\gamma_{0}\mathbb{I}_{L^{d}}+\gamma_{1}\mathcal{A}^{\left(d\right)}
\]
where $\mathcal{A}^{\left(d\right)}$ is the lattice adjacency matrix
in dimension $d$, whose diagonalization is discussed in the next
section.

\subsection{Diagonalization of $A^{\left(d\right)}$\label{subsec:Diagonalizationexact}}

The hypercubic lattice in $d$ dimensions can be regarded as the cartesian
product of linear-chain graphs, one for each dimension. The adiacency
matrix of the whole lattice can be thus expressed as function of the
adiacency matrices of the single linear chains, by means of the Kronecker
product (indicated by $\otimes$):

\[
\mathcal{A}^{\left(d\right)}=\mathcal{A}^{\left(1\right)}\otimes\mathcal{\mathbb{I}}_{L}\otimes...\mathcal{\otimes\mathbb{I}}_{L}+\mathbb{I}_{L}\otimes\mathcal{A}^{\left(1\right)}\otimes\mathbb{I}_{L}\otimes...\mathbb{I}_{L}+...+\mathbb{I}_{L}\otimes...\otimes\mathbb{I}_{L}\otimes\mathcal{A}^{\left(1\right)}
\]
where $\mathcal{A}^{\left(1\right)}$ is the adiacency matrix of a
(closed) linear chain of size $L$:
\[
\mathcal{A}^{\left(1\right)}=\begin{bmatrix}0 & 1 & 0 & \cdots & 0 & 0 & 1\\
1 & 0 & 1 & \cdots & 0 & 0 & 0\\
0 & 1 & 0 & \cdots & 0 & 0 & 0\\
\vdots & \vdots & \vdots & \ddots & \vdots & \vdots & \vdots\\
0 & 0 & 0 & \cdots & 0 & 1 & 0\\
0 & 0 & 0 & \cdots & 1 & 0 & 1\\
1 & 0 & 0 & \cdots & 0 & 1 & 0
\end{bmatrix}
\]
The above expression allows to compute the spectral decomposition
of $\mathcal{A}^{\left(d\right)}$ just by knowing the spectrum of
the adiacency matrix of the linear chain. The matrix $\mathcal{A}^{\left(1\right)}$
is a special kind of circulant matrix and therefore it can be diagonalized
exactly {[}\citealp{daviscirculant}{]}. Its eigenvalues and eigenvectors
are shown below:
\[
\lambda_{x}^{\left(1\right)}=2\textnormal{cos}\left(\frac{2\pi}{L}x\right)\qquad\qquad\nu_{x}^{\left(1\right)}=\frac{1}{\sqrt{L}}\left(1,w_{x},w_{x}^{2},\ldots,w_{x}^{L-1}\right)
\]

where $x\in\left\{ 0,...,L-1\right\} $ and $w_{x}=e^{i\frac{2\pi}{L}x}$.

The spectral decomposition of $\mathcal{A}^{(d)}$ reads:
\begin{align}
\lambda_{\left(x_{1},\ldots,x_{d}\right)}^{(d)} & =\sum_{j=1}^{d}\lambda_{x_{j}}^{\left(1\right)}=2\sum_{j=1}^{d}\textnormal{cos}\left(\frac{2\pi}{L}x_{j}\right)\label{eq:eigenvd}\\
\nu_{\left(x_{1},\ldots x_{d}\right)} & =\varotimes_{j=1}^{d}\nu_{x_{j}}^{\left(1\right)}
\end{align}

We recall now the expression of the eigenvalues of $S^{\left(d\right)}$:
\[
\lambda_{\left(x_{1},\ldots,x_{d}\right)}=2d\gamma_{0}+2\gamma_{1}\sum_{j=1}^{d}\textnormal{cos}\left(\frac{2\pi}{L}x_{j}\right)
\]
The inverse matrix elements $\Sigma_{ii},\Sigma_{ij}$ can be computed
in a straightfoward way. In particular, in the thermodynamic limit
$\left(L\to\infty\right)$ their expressions read:
\begin{align}
\sigma_{0} & =\frac{1}{\gamma_{0}}R\left(r\right)\label{eq:sigma0}\\
\sigma_{1} & =\frac{1}{\gamma_{0}r}\left[\frac{1}{2d}-R\left(r\right)\right]\label{eq:sigma1}
\end{align}

where $r=\frac{\gamma_{1}}{\gamma_{0}}$ and $R_{d}\left(r\right)=\frac{1}{2}\int_{0}^{\infty}dte^{-dt}\left[\mathcal{I}_{0}\left(rt\right)\right]^{d}$,
where $\mathcal{I}_{0}$ is the modified Bessel function of the first
kind of order $0$.

\subsection{Simplified DC equations\label{subsec:Simplified-DC-equations}}

It is possible to simplify the original system \eqref{eq:startingDC}
in order to get a fixed point equation for the magnetization $m$.
By eliminating the variable $y$ and setting $J=1$ we get
\begin{align}
z= & m\left(\frac{1}{\sigma_{0}+\sigma_{1}}-\left(\gamma_{0}+\gamma_{1}\right)\right)+\frac{\beta}{2d}h^{ext}\nonumber \\
= & m\gamma_{0}\left(\frac{1}{R_{d}\left(r\right)+\frac{1}{r}\left[\frac{1}{2d}-R_{d}\left(r\right)\right]}-r-1\right)+\frac{\beta}{2d}h^{ext}\label{eq:z}\\
\Gamma= & \beta+\frac{\sigma_{1}}{\sigma_{0}^{2}-\sigma_{1}^{2}}+r\gamma_{0}\nonumber \\
= & \beta+\gamma_{0}\left(\frac{\frac{1}{r}\left(\frac{1}{2d}-R_{d}\left(r\right)\right)}{R_{d}^{2}\left(r\right)-\frac{1}{r^{2}}\left[\frac{1}{2d}-R_{d}\left(r\right)\right]^{2}}+r\right)\label{eq:gamma}
\end{align}
Now, putting together Eq.\eqref{eq:startingDC} with Eq.\eqref{eq:z}-\eqref{eq:gamma}
and the definitions \eqref{eq:sigma0}-\eqref{eq:sigma1} we get the
following system:

\begin{align}
\beta= & \text{atanh}\left[\frac{1}{\rho}k_{d}\left(r\right)\left(1-m^{2}\right)+m^{2}\right]-g_{d}\left(r\right)\frac{\text{atanh}m}{m}-\text{atanh}\left[\tanh^{2}\left(f_{d}\left(r\right)\text{atanh}m+\frac{\beta}{2d}h^{ext}\right)\right]\label{eq:DCIsing1SI}\\
m= & \tanh\bigg[f_{d}\left(r\right)\text{atanh}m+\frac{\beta}{2d}h^{ext}+\label{eq:DCIsing2SI}\\
 & +\text{atanh}\left(\tanh\left(\beta+g_{d}\left(r\right)\frac{\text{atanh}m}{m}\right)\tanh\left(f_{d}\left(r\right)\text{atanh}m+\frac{\beta}{2d}h^{ext}\right)\right)\bigg]\nonumber 
\end{align}
where $k_{d}\left(r\right)=\frac{1-2dR_{d}\left(r\right)}{2drR_{d}\left(r\right)}$,
$g_{d}\left(r\right)=\frac{k_{d}\left(r\right)}{1-k_{d}\left(r\right)^{2}}+rR_{d}\left(r\right)$,
$f_{d}\left(r\right)=\frac{1}{1+k_{d}\left(r\right)}-\left(r+1\right)R_{d}\left(r\right)$.
Such equations can be solved at fixed $r$ in the variables $\beta,m$.
For $h=0$ the system reduces to a single fixed point equation for
$m=M\left(m\left(r\right),r\right)$ while $\beta$ is fixed by \eqref{eq:DCIsing1SI}.

\subsubsection*{Computation of $\beta_{p}$}

For the paramagnetic solution $m=0$ (with $h=0$) we get the following
equation for $\beta\left(r\right)$:
\begin{align*}
\beta= & \text{atanh}\left(\frac{1}{\rho r}\left[\frac{1}{2dR_{d}\left(r\right)}-1\right]\right)-g\left(r\right)
\end{align*}
For $d\ge3$, the maximum value at which a paramagnetic solution exists
corresponds to the point $r=-1$. Therefore, the value of the critical
point $\beta_{p}$ is computed by taking the $r\to-1$ limit of Eq.\eqref{eq:DCIsing1SI}:
\begin{align}
\beta_{p}= & \text{atanh}\left(1-\frac{1}{z}\right)-z\left(\frac{z-1}{2z-1}\right)+\frac{z}{2d}\label{eq:betap-1}
\end{align}
with $z=2dR_{d}\left(-1\right).$

\subsubsection*{Computation of $\beta_{m}$\label{subsec:betam}}

Eqs. \eqref{eq:DCIsing1SI}-\eqref{eq:DCIsing2SI} implicitly define
a function $m\left(r\right)$ such that $M\left(m\left(r\right),r\right)=m$,
and thus also $\beta\left(r\right)=\beta\left(m\left(r\right),r\right)$.
We seek to find the point $m^{*}=m\left(r^{*}\right)$ and $\beta_{m}=\beta\left(m^{*},r^{*}\right)$
such that $\frac{d\beta}{dr}\left(m\left(r^{*}\right),r^{*}\right)=0$.
Taking the total derivative of $\beta\left(m\left(r\right),r\right)$
we get the equation to be solved 
\[
0=\frac{d\beta}{dr}=\frac{\partial\beta}{\partial r}+\frac{\partial\beta}{\partial m}\frac{dm}{dr}
\]

To compute $\frac{dm}{dr}$ we use its implicit definition,
\begin{align*}
0 & =\frac{d}{dr}\left\{ M\left(m\left(r\right),r\right)-m\left(r\right)\right\} \\
 & =\left(\frac{\partial M}{\partial m}\left(m\left(r\right),r\right)-1\right)\frac{dm}{dr}+\frac{\partial M}{\partial r}\left(m\left(r\right),r\right)\\
\frac{dm}{dr} & =-\frac{\frac{\partial M}{\partial r}\left(m\left(r\right),r\right)}{\frac{\partial M}{\partial m}\left(m\left(r\right),r\right)-1}
\end{align*}
to get finally the $2\times2$ system in variables $m,r$:
\begin{align}
M\left(m,r\right)-m & =0\label{eq:betaDCm1}\\
\frac{\partial\beta}{\partial r}\left(m,r\right)\left(\frac{\partial M}{\partial m}\left(m,r\right)-1\right)-\frac{\partial M}{\partial r}\left(m,r\right)\frac{\partial\beta}{\partial m}\left(m,r\right) & =0\label{eq:betam}
\end{align}

\subsubsection*{Stability}

The stability of a fixed point $m^{*}=m\left(r^{*}\right)$ can be
analyzed by computing $\frac{dM}{dm}\bigg|_{m^{*}}$. In particular,
starting from the system \eqref{eq:DCIsing1SI}-\eqref{eq:DCIsing2SI}
where $r$ is implicitly defined as $r=R\left(\beta,m\right),$the
instability occurs when $\frac{dM}{dm}\bigg|_{m^{*}}=1$. Writing
the original system using the definition of $r$ we get $m=M\left(m,R\left(\beta,m\right)\right)$
and $\beta=B\left(m,R\left(\beta,m\right)\right)$. The equation we
want to solve is
\begin{align*}
1 & =\frac{dM}{dm}=\frac{\partial M}{\partial m}+\frac{\partial M}{\partial r}\frac{\partial R}{\partial m}
\end{align*}

To compute $\frac{\partial R}{\partial m}$ we use again its implicit
definition:
\begin{align*}
0 & =\frac{\partial B}{\partial m}+\frac{\partial B}{\partial r}\frac{\partial R}{\partial m}\\
\frac{\partial R}{\partial m} & =-\frac{\frac{\partial B}{\partial m}}{\frac{\partial B}{\partial r}}
\end{align*}

The final system to solve is
\begin{align}
M\left(m,r\right)-m & =0\label{eq:stab1}\\
\frac{\partial M}{\partial m}-\frac{\partial M}{\partial r}\frac{\frac{\partial B}{\partial m}}{\frac{\partial B}{\partial r}} & =1\label{eq:stab2}
\end{align}

For $d\geq3$, the solution becomes unstable exactly at the point
$\left(r_{m},\beta_{m}\right)$ computed through \ref{eq:betaDCm1}-\ref{eq:betam}.

\subsection{D=2\label{subsec:D=00003D2}}

On a 2-dimensional square lattice, the DC solution is qualitatively
different w.r.t. $d\geq3$ because the function $R\left(r\right)$
is logarithmically divergent for $r\to-1$. In such case the maximum
value at which the paramagnetic solution exists $\left(\beta_{p}=0.37693\right)$
corresponds to the point $r_{p}=-0.994843$. The ferromagnetic solution
turns out to be stable for $r_{m}<r<0$ with $r_{m}=-0.99405$, corresponding
to $\beta_{m}=0.388448$ (the point $\left(r_{m},\beta_{m}\right)$
is found as a solution of Eq. \ref{eq:stab1}-\ref{eq:stab2}). Therefore
there exists a temperature interval $\beta_{p}<\beta<\beta_{m}$ in
which no stable DC solution can be found.

For finite size lattices the DC solution can still be found numerically,
showing similar performances with respect to CVM on both ferromagnetic
and spin glass models (Fig. \ref{fig:2dsquare}). However, especially
on ferromagnetic systems DC solution is numerically unstable close
to the transition $\beta_{p}$. One way to reduce numerical instability
in such region is to decrease the interpolation parameter $\rho$,
typically fixed to 1 for DC. Neverthless, the meaning of the DC($\rho$)
approximation in this case is not clear. 

\begin{figure}
\includegraphics[width=0.35\textwidth]{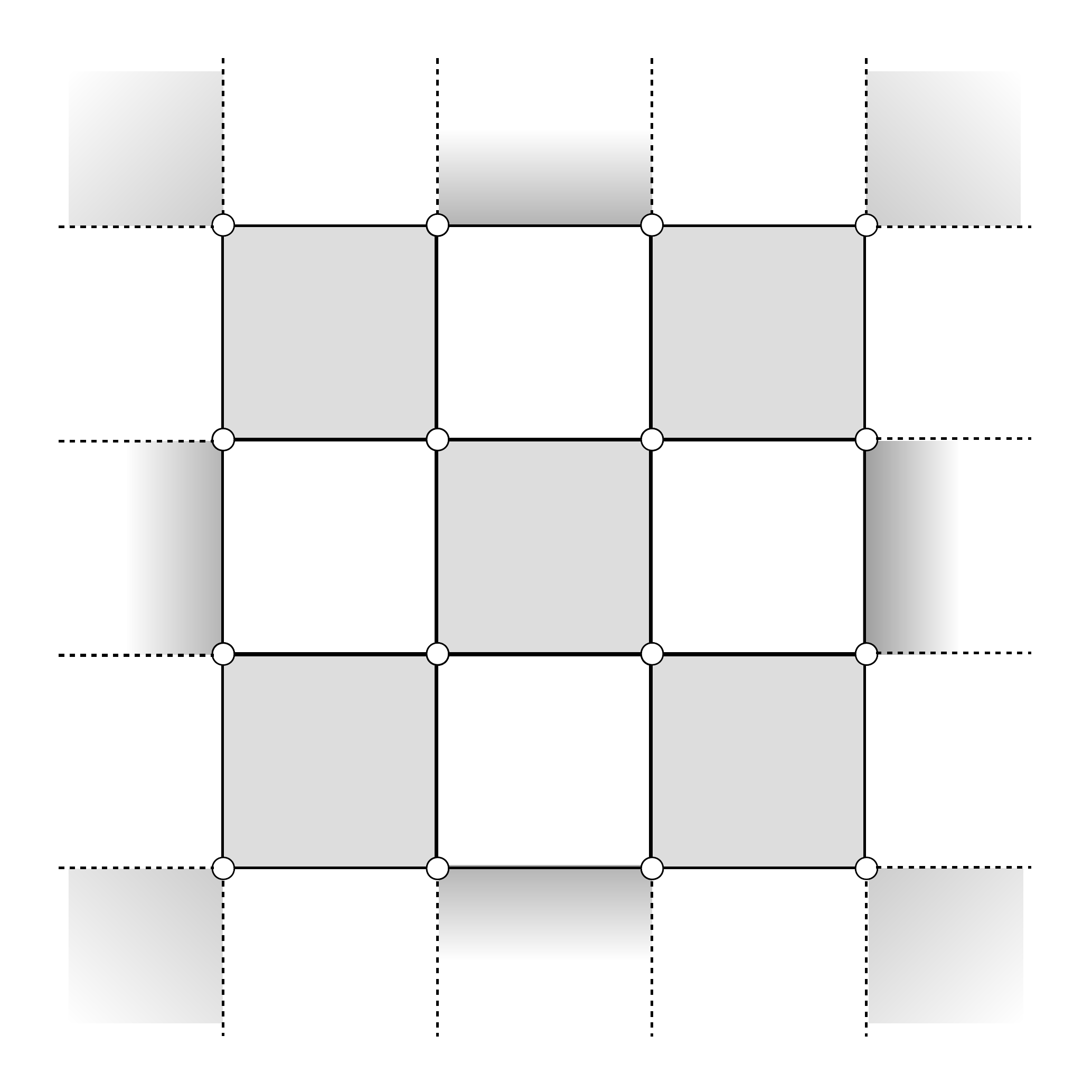}\includegraphics[width=0.5\textwidth]{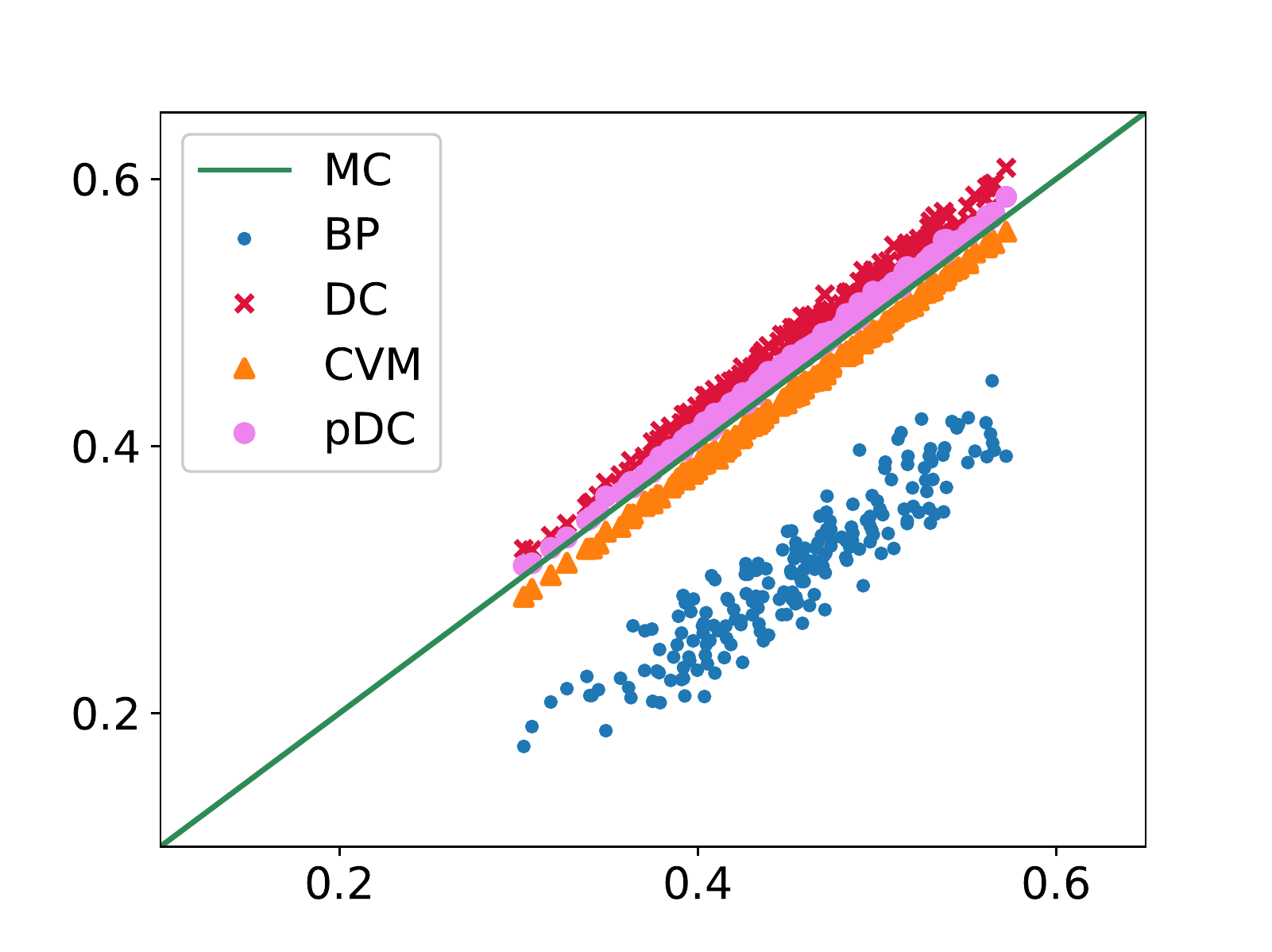}\caption{Left: distribution of non-overlapping plaquettes (in grey) on a 2-dimensional
square lattice. Right: correlations on 2-dimensional Ising Model on
square lattice of size $L=10$ at $\beta=0.36$, with $0$ external
field and couplings drawn from a uniform distribution in $\left(0.5,1.5\right)$.
Comparison of DC, pDC, BP and CVM\label{fig:2dsquare}}
\end{figure}
One possible way to improve the DC approximation is to take into account
small loops explicitly. In particular, we consider a gaussian family
of approximating distribution factorized over plaquettes of $2^{d}$spins
($d$ is the number of dimensions). Plaquettes are chosen in such
a way that there is no overlap between links in the gaussian distribution.
In this way, DC equation are exact on a plaquette tree with only site-overlaps.
Results are shown in \ref{fig:2dsquare}: plaquette-DC (pDC) is in
general slighty better than standard DC and comparable to CVM. 

\subsection{Finite size corrections}

In homogeneous models the gaussian covariance matrix can be diagonalized
analytically even for a finite size lattice (of size $L$). Therefore
we can compute finite size corrections to the DC solution at a fixed
$\beta$, as shown in the following plot:

\begin{figure*}
\includegraphics[width=0.6\textwidth]{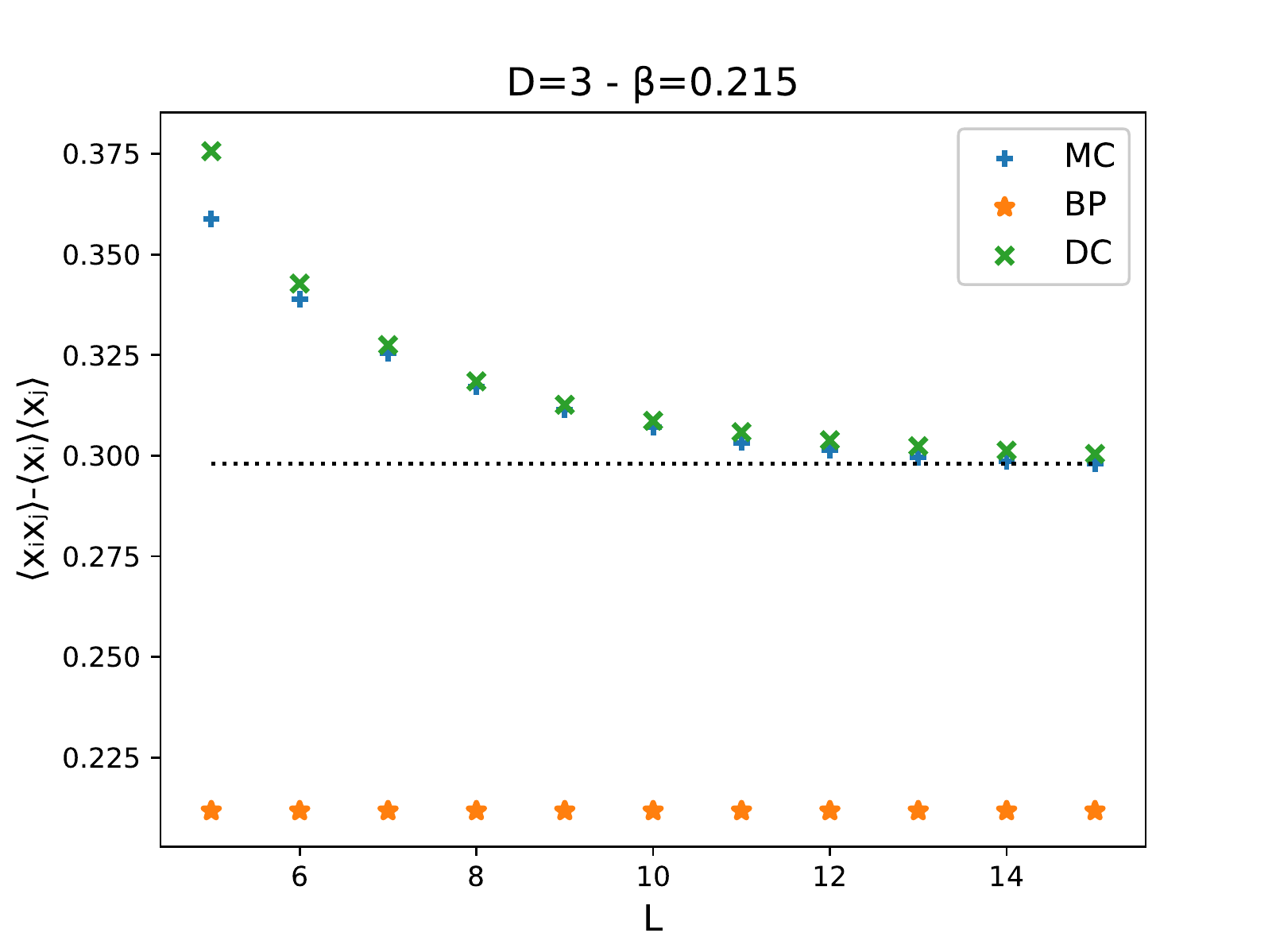}\caption{Finite size correction of equilibrium correlations at $\beta=0.215$
on a 3-dimensional cubic lattice of size $L\in\left\{ 5,...,15\right\} $
with $h=0,J=1$. Comparison of BP and DC solutions with Monte-Carlo
simulations.}
\end{figure*}
DC solution turns out to be in good agreement with MC results; on
the other hand, BP does not take into account at all finite size corrections
because of the local character of the approximation.

\subsection{Scaling of $\beta_{c}$ in the high dimensional limit}

Starting from the expression of the critical inverse temperature $\beta_{p}$
it is possible to compute the $1/d$ expansion in the high-dimensional
limit. We recall the expression of the critical temperature (\ref{eq:betap-1}):

\begin{align*}
\beta_{p}= & \text{atanh}\left(1-\frac{1}{z}\right)-z\left(\frac{z-1}{2z-1}\right)+\frac{z}{2d}
\end{align*}

where $z=2dR_{d}\left(-1\right)$.

Defining $x=1/d$ and expanding around $x=0$ we get:
\[
\frac{1}{2d\beta_{p}}=1-\frac{1}{2}d^{-1}-\frac{1}{3}d^{-2}-\frac{13}{24}d^{-3}-\frac{979}{720}d^{-4}-\frac{2039}{480}d^{-5}+O\left(d^{-6}\right).
\]

This expansion is exact up to the $d^{-4}$ order (the correct coefficient
of $d^{-5}$ is $-\frac{2009}{480}$) \citep{fisher_ising_1964}.
For comparison, Mean Field is exact up to the $d^{0}$ order, Bethe
is exact up to the $d^{-1}$ order, and Loop-Corrected Bethe and Plaquette-CVM
are exact up to the $d^{-2}$ order.

For the sake of completeness, we report the series expansion of $R_{d}\left(-1\right)$
around $x=0$:

\[
R_{d}\left(-1\right)=\frac{1}{2}d^{-1}+\frac{1}{4}d^{-2}+\frac{3}{8}d^{-3}+\frac{3}{4}d^{-4}+\frac{15}{8}d^{-5}+\frac{355}{64}d^{-6}+\frac{595}{32}d^{-7}+O\left(d^{-8}\right)
\]

\section{Multistates variables}

The method we presented is based on the possibility to fit the probability
values of a discrete binary distribution with the density values of
a univariate gaussian on the same support. When the model variables
take $q>2$ values there is no general way to fit single-node marginals
with a univariate Gaussian distribution. One possible solution is
to replace each $q$-state variable $x_{i}$ with a vector of $q$
(correlated) binary variables $\boldsymbol{s^{i}}$, where $s_{\alpha}^{i}\in\left\{ -1,1\right\} \forall\alpha=1,..,q$,
with the following constraint: 
\[
\sum_{\alpha=1}^{q}s_{\alpha}^{i}=2-q
\]
In this way, for each node $i$, only configurations of the type $\boldsymbol{s}^{i}=\left\{ 1,-1,...,-1\right\} $
(and its permutations) are allowed, in order to select just one of
the $q$ states for $x_{i}$. For each factor node $\emph{a}$, such
constraints can be implemented by adding a set of delta functions
in the original probability distribution, which is now a function
of the new binary variables $\boldsymbol{s}^{i}$. The correlations
induced by these constraints on the spin components of each $\boldsymbol{s^{i}}$
introduce short loops even when the original graph is a tree. Neverthless,
it is still possible to write a set of matching equation similar to
the 2-states case which is exact on trees.

\bibliographystyle{apsrev4-1}

\end{document}